\documentclass{article}

\usepackage{arxiv}

\usepackage[utf8]{inputenc}    
\usepackage[T1]{fontenc}       

\usepackage{amsmath,amsfonts,amssymb}
\usepackage{amsthm}
\usepackage{array}
\usepackage{booktabs}
\usepackage{graphicx}
\usepackage{url}
\usepackage{enumitem}
\usepackage{algpseudocode}
\usepackage{tikz}
\usetikzlibrary{arrows.meta, positioning, calc, shapes.geometric, fit, backgrounds}
\usepackage{xcolor}
\usepackage{cite}
\usepackage{lettrine}          
\usepackage{hyperref}          
\newcommand{\IEEEPARstart}[2]{\lettrine[lines=2]{#1}{#2}}

\theoremstyle{plain}
\newtheorem{theorem}{Theorem}
\newtheorem{proposition}{Proposition}
\newtheorem{lemma}{Lemma}
\newtheorem{corollary}{Corollary}
\theoremstyle{definition}
\newtheorem{definition}{Definition}

\theoremstyle{remark}
\newtheorem{remark}{Remark}

\newcounter{alg}
\newcommand{\algheader}[1]{%
  \refstepcounter{alg}%
  \vskip1pt\hrule\vskip3pt%
  \noindent\textbf{Algorithm~\thealg:}\ #1\par\vskip3pt\hrule\vskip3pt}

\newcommand{\NN}{\mathbb{N}}
\newcommand{\Bit}{\{0,1\}}
\newcommand{\Bits}{\{0,1\}^{*}}
\newcommand{\G}{\mathsf{G}}
\newcommand{\gray}{\mathsf{gray}}
\newcommand{\V}{V}
\newcommand{\Lev}{d_{\mathrm{L}}}
\newcommand{\Ham}{d_{\mathrm{H}}}
\newcommand{\Len}{L}
\newcommand{\eps}{\varepsilon}
\newcommand{\xor}{\oplus}
\newcommand{\Ab}{\{0,1,\dots,b-1\}}
\newcommand{\Abs}{\{0,1,\dots,b-1\}^{*}}
\newcommand{\shift}[1]{\widehat{#1}}

\tikzset{
  wordnode/.style={draw=black, thick, rounded corners=3pt, fill=blue!8,
    minimum width=1.15cm, minimum height=0.62cm, font=\ttfamily\small, inner sep=2pt},
  emptynode/.style={wordnode, fill=gray!12},
  subEdge/.style={-{Stealth[scale=1.0]}, blue!55!black, thick},
  insEdge/.style={-{Stealth[scale=1.1]}, red!70!black, thick, dashed},
  blabel/.style={font=\footnotesize\sffamily, text=gray!60!black}
}

\title{Variable-length Gray codes for the Natural Numbers}

\author{
	\href{https://orcid.org/0000-0001-8231-5687}{\includegraphics[scale=0.06]{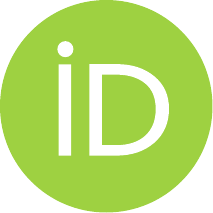}\hspace{1mm}Ezequiel L\'opez-Rubio} \\
	Department of Computer Languages and Computer Science \\
	University of M\'alaga \\
	Bulevar Louis Pasteur 35, 29071 M\'alaga, Spain \\
	and ITIS Software, University of M\'alaga \\
	C/ Arquitecto Francisco Pe\~nalosa 18, 29010 M\'alaga, Spain \\
	\texttt{ezeqlr@lcc.uma.es} \\
}

\date{}

\renewcommand{\shorttitle}{Variable-length Gray codes for the Natural Numbers}

\hypersetup{
pdftitle={Variable-length Gray codes for the Natural Numbers},
pdfauthor={Ezequiel L\'opez-Rubio},
pdfkeywords={Gray code, variable-length codes, bijective numeration, Levenshtein distance, b-ary codes, universal integer codes, edit distance, large language models, tokenization, numeracy},
}

\begin{document}
\maketitle
\begin{abstract}
The modular $b$-ary Gray code arranges  fixed-length $b$-ary representations of  intervals of natural numbers so that consecutive numbers differ in a single digit. Its usefulness, however, is tied to a fixed word length. We introduce, for every integer base $b\ge 2$, a \emph{variable-length Gray code} $\V_{b}$: a bijection from the natural numbers onto the set $\Abs$ of all finite strings over the $b$-symbol alphabet $\{0,1,\dots,b-1\}$. The construction orders the natural numbers by codeword length into blocks---block $k$ being the $b^{k}$ strings of length $k$ and beginning at index $N_{k}=(b^{k}-1)/(b-1)=1+b+\cdots+b^{k-1}$---and lists each block along the modular $b$-ary Gray code of the within-block offset, with its leading digit incremented modulo $b$. We prove four properties for arbitrary $b$. First, $\V_{b}$ is a bijection onto $\Abs$, a complete code that assigns exactly one codeword to every finite string over the alphabet, including the empty string. Second, the Levenshtein (edit) distance between the codewords of two consecutive integers is always one: a single-symbol substitution inside a length block, and a single leading-symbol insertion $0^{k}\mapsto1\,0^{k}$ at each block boundary. Third, codeword length is monotone non-decreasing in the encoded number and equals $\lfloor\log_{b}((b-1)n+1)\rfloor$. Fourth, the transition to codewords of length $k$ occurs exactly at $n=N_{k}=\sum_{i=0}^{k-1}b^{i}$, where $\V_{b}(n)=1\,0^{k-1}$. The binary code is the case $b=2$, in which the modular binary Gray code coincides with the reflected binary Gray code $m\mapsto m\oplus\lfloor m/2\rfloor$. The presented code therefore realizes a Hamiltonian enumeration of $\Abs$ under unit edit steps while retaining a near-optimal, self-adapting length profile. These properties suggest a use in machine learning: large language models routinely emit natural numbers as symbol strings, yet the positional notations they rely on are neither complete---strings with leading zeros are invalid or redundant---nor locally stable, since incrementing a number may rewrite many symbols at once. Because $\V_{b}$ is a complete code over $\Abs$ and moves by a single edit between consecutive integers, it removes both obstacles and is a natural candidate representation for numeric tokens; we develop this argument and discuss the code's compression behavior relative to fixed-length Gray codes.
\end{abstract}

\keywords{Gray code \and variable-length codes \and bijective numeration \and Levenshtein distance \and $b$-ary codes \and universal integer codes \and edit distance \and large language models \and tokenization \and numeracy.}

\section{Introduction}
\label{sec:intro}

\IEEEPARstart{T}{he} binary Gray code is one of the most widely used non-standard numeral systems in computer science and engineering \cite{gray1953pulse,gilbert1958gray,doran2007gray}. Its defining property is that the binary words assigned to two numerically consecutive integers differ in exactly one bit position. This single-bit-change property makes Gray codes indispensable wherever a physical or logical quantity is sampled while it changes: rotary and linear position encoders, analog-to-digital converters, asynchronous state machines, Karnaugh maps, and the mutation neighborhoods of genetic algorithms all exploit it to avoid the large, transient errors that ordinary binary counting produces when several bits flip at once.

The classical Gray code is, however, a \emph{fixed-length} object over the \emph{binary} alphabet. There are two distinct extensions of the binary Gray code to arbitrary bases $b$, namely the \emph{reflected} and \emph{modular} $b$-ary Gray codes. To encode natural numbers in a base $b$ one must commit in advance to a word length $w$; the code then represents only the $b^{w}$ natural numbers $0,1,\dots,b^{w}-1$ (for the binary case, $b=2$), and it spends the full $w$ symbols on every one of them, including small numbers that would need far fewer. Two limitations follow. First, the representable range is bounded, so an unbounded stream of natural numbers cannot be encoded without periodically enlarging $w$ and, in general, re-encoding. Second, the representation is wasteful for skewed distributions in which small natural numbers dominate, a situation common when integers denote indices, counters, offsets, or repetition factors inside a larger structured object.

A concrete and timely instance of the second limitation arises in the training of large language models (LLMs). Such models generate text densely populated with natural numbers---counts, indices, identifiers, keys, offsets, and repetition factors---which they must emit one symbol at a time \cite{thawani2021representing,spithourakis2018numeracy}. The surface form in which a number is written has been shown to have a strong effect on how well a model learns to manipulate it \cite{nogueira2021arithmetic,wallace2019numbers,singh2024tokenization}. Standard positional notation is ill-suited to this setting in two respects that a Gray-type code addresses directly. First, it is not \emph{complete}: over a fixed alphabet the valid canonical representations are only a sparse subset of all strings---those without leading zeros---so a generative model must expend capacity learning to avoid producing invalid or redundant forms rather than modelling the quantities themselves. Second, it is not \emph{locally stable}: incrementing a number can rewrite many symbols at once (in base ten, $999\mapsto1000$ changes four digits and the length), so numerically adjacent values may have distant representations. A code that is complete over all strings and changes by a single edit between consecutive integers removes both obstacles at once; the variable-length Gray code introduced here has exactly these two properties, and Section~\ref{sec:discussion} argues in detail why they make it an attractive representation for numeric tokens. This application motivates, but is logically independent of, the mathematical development that follows.

In this paper we remove both restrictions at once, for every integer base $b\ge 2$. We define a \emph{variable-length Gray code}
\[
  \V_{b} : \NN \longrightarrow \Abs ,
\]
which maps each natural number to a finite string over the alphabet $\{0,1,\dots,b-1\}$ of self-adapting length. The construction is elementary and orders the natural numbers by codeword length. The empty string encodes $0$; the $b$ strings of length one encode $1,\dots,b$; the $b^{2}$ strings of length two come next; and so on, block $k$---the length-$k$ strings---beginning at $N_{k}=1+b+\cdots+b^{k-1}=(b^{k}-1)/(b-1)$. Within a block we arrange the strings along the modular $b$-ary Gray code of the within-block offset and rotate the leading symbol by one modulo $b$, so that each block runs from $1\,0^{k-1}$ to $0^{k}$ and successive blocks are joined by inserting a single leading symbol, $0^{k}\mapsto1\,0^{k}$. Tables~\ref{tab:example2}, \ref{tab:example10} and~\ref{tab:example3} list the first codewords for $b=2$, $b=10$, and $b=3$.

We establish four properties, stated informally here and proved in Section~\ref{sec:properties} for arbitrary $b\ge 2$. (i)~$\V_{b}$ is a bijection onto the set of \emph{all} finite strings over $\{0,\dots,b-1\}$, so every string is a codeword of exactly one integer and the code is complete. (ii)~The Levenshtein distance \cite{levenshtein1966binary} between the codewords of $n$ and $n+1$ is always one, so the enumeration $\V_{b}(0),\V_{b}(1),\V_{b}(2),\dots$ is a single-edit walk that visits every finite string exactly once. (iii)~Codeword length is monotone non-decreasing in $n$ and equals $\lfloor\log_{b}((b-1)n+1)\rfloor$, which is within one symbol of the information-theoretic ideal $\log_{b}$ of the magnitude. (iv)~The passage to codewords of length $k$ happens precisely at $n=N_{k}=\sum_{i=0}^{k-1}b^{i}$, at which point the codeword is a one followed by $k-1$ zeros.

For $b=2$ the modular binary Gray code and the reflected binary code coincide. They are known as the binary Gray code $m\mapsto m\oplus\lfloor m/2\rfloor$, and $\V_{2}$ is exactly the binary Gray-code analogue of the ordinary bijective binary numeration that sends $n$ to the binary representation of $n+1$ with its leading one removed; the general $b$ replaces the binary Gray code by its modular $b$-ary counterpart. Taken together, the four properties characterize $\V_{b}$ as a Gray code for the countably infinite set $\Abs$ ordered by unit edit distance, rather than for a fixed-size $b$-ary cube ordered by unit Hamming distance. 

The remainder of the paper is organized as follows. Section~\ref{sec:previous} reviews Gray codes, edit distances, and variable-length integer codes. Section~\ref{sec:method} fixes notation, recalls the standard fixed-length $b$-ary Gray code, and defines $\V_{b}$, with worked examples for $b=2$ and $b=10$. Section~\ref{sec:properties} proves the four theorems for arbitrary $b\ge 2$. Section~\ref{sec:discussion} discusses compression relative to fixed-length Gray codes, and Section~\ref{sec:conclusion} concludes.

\section{Previous Works}
\label{sec:previous}

\paragraph{Gray codes}
The reflected binary code was patented by Gray~\cite{gray1953pulse} for pulse-code communication, although the underlying single-change ordering had been used earlier in telegraphy and puzzles. Gilbert~\cite{gilbert1958gray} placed it on a combinatorial footing by identifying Gray codes with Hamiltonian paths on the $n$-dimensional hypercube, where vertices are $n$-bit strings and edges join strings at Hamming distance one. The standard $b$-bit reflected Gray code $\G(\cdot,b)$ is the bijection between $\{0,\dots,2^{b}-1\}$ and $\Bit^{b}$ given by $\gray(m)=m\xor\lfloor m/2\rfloor$; it is a single Hamiltonian path of the hypercube, and consecutive codewords therefore have unit Hamming distance. Beyond the binary code, a large literature studies \emph{combinatorial} Gray codes, in which the objects of a combinatorial class (combinations, permutations, partitions, spanning trees, and many others) are listed so that successive objects differ by a minimal local change; the survey by Savage~\cite{savage1997survey} and the treatment in Knuth's \emph{Art of Computer Programming} \cite{knuth2011taocp4a} give a comprehensive account. Our construction belongs to this tradition, but the underlying combinatorial class is the set of \emph{all} finite binary strings and the local change is a single \emph{edit} operation rather than a bit flip; the fixed length of the classical code is replaced by a variable one.

\paragraph{Edit and Hamming distances}
The Hamming distance \cite{hamming1950error} counts positions at which two equal-length strings differ and is the natural metric of fixed-length Gray codes. When strings may have different lengths, the appropriate metric is the Levenshtein (edit) distance \cite{levenshtein1966binary}, the minimum number of single-character insertions, deletions, and substitutions transforming one string into another. Hamming distance is the restriction of Levenshtein distance to equal-length strings and to substitution operations only; a unit Hamming distance is therefore also a unit Levenshtein distance. This relationship is what allows a variable-length code to inherit, in a suitable sense, the single-change property of the classical Gray code.

\paragraph{Variable-length and universal integer codes}
Encoding the natural numbers by binary strings of number-dependent length is the subject of a rich body of work on universal codes. Elias~\cite{elias1975universal} introduced the $\gamma$, $\delta$, and $\omega$ codes, self-delimiting representations whose length grows like $\log_2 n$ up to lower-order terms. Golomb~\cite{golomb1966run} gave run-length codes optimal for geometric distributions, later specialized as Rice codes. Apostolico and Fraenkel~\cite{apostolico1987robust} and Fraenkel and Klein~\cite{fraenkel1996robust} developed complete and robust universal codes based on Fibonacci (Zeckendorf) representations, prized for their resynchronization properties. A recurring design axis in this literature is whether the code is \emph{prefix-free} (self-delimiting) or \emph{bijective}. Bijective numeration systems, discussed for instance by Salomaa~\cite{salomaa1981jewels} and Knuth~\cite{knuth2011taocp4a}, place the finite strings over an alphabet in one-to-one correspondence with the natural numbers; the familiar binary instance maps $n$ to the binary representation of $n+1$ with its leading one removed. Our code is the Gray-coded analogue of exactly this bijection, generalized to an arbitrary base $b$: within each length block we replace ordinary $b$-ary counting by the $b$-ary Gray code. The novelty is that this single change endows the bijection with a unit-edit-distance property between consecutive integers, which ordinary bijective base-$b$ numeration does not possess. The trade-off between length optimality and the various structural guarantees of such codes is treated from an information-theoretic standpoint by Cover and Thomas~\cite{coverthomas2006}.

\paragraph{Number representation in language models}
How numbers are written on the page matters for the neural sequence models that read and generate them. Wallace et al.~\cite{wallace2019numbers} showed that standard token and sub-word embeddings capture numerical magnitude only imperfectly, and Spithourakis and Riedel~\cite{spithourakis2018numeracy} found that ordinary language-model output heads predict numerals poorly and proposed dedicated numeric mechanisms. Nogueira et al.~\cite{nogueira2021arithmetic} demonstrated that the \emph{surface form} of a number strongly determines whether a transformer learns to add and subtract, with sub-word segmentation performing worst and explicit position markers best; Singh and Strouse~\cite{singh2024tokenization} traced arithmetic errors in frontier models to their number-tokenization schemes, noting that production systems disagree---some tokenizing digit by digit, others in one-, two-, or three-digit chunks. The survey by Thawani et al.~\cite{thawani2021representing} organizes these representational choices and their trade-offs. A common thread is that the default sub-word tokenizers inherited from general text \cite{sennrich2016neural} treat numerals as if they were words, an inductive bias poorly matched to numeric structure. The present construction speaks to this literature from an unusual angle: rather than augmenting an embedding with magnitude information, it changes the \emph{string} assigned to each integer so that the assignment is complete (no invalid or redundant forms) and locally smooth (unit edit distance between successors). We develop the implications in Section~\ref{sec:discussion}.

\section{Methodology}
\label{sec:method}

\subsection{Notation}
Fix an integer base $b\ge 2$. Let $\NN=\{0,1,2,\dots\}$ be the natural numbers including zero, and let $\Ab$ be the $b$-symbol alphabet, whose elements we call \emph{digits}. Let $\Abs$ be the set of all finite strings (words) over this alphabet, including the empty string $\eps$. For a word $w$, $\Len(w)$ denotes its length and $\Len(\eps)=0$; we write $0^{k}$ for the all-zero word of length $k$ (so that $0^{0}=\eps$) and use juxtaposition for concatenation, so that $1\,0^{k}$ is the digit $1$ followed by $k$ zeros. Digit arithmetic is taken modulo $b$ where indicated. The binary alphabet $\Bit$ and the binary strings $\Bits$ are the case $b=2$.

For two words $u,v\in\Abs$, the \emph{Levenshtein distance} $\Lev(u,v)$ is the minimum number of single-character insertions, deletions, and substitutions needed to transform $u$ into $v$ \cite{levenshtein1966binary}. When $\Len(u)=\Len(v)$, the \emph{Hamming distance} $\Ham(u,v)$ is the number of positions at which they differ \cite{hamming1950error}. For equal-length words, $\Lev(u,v)\le\Ham(u,v)$, since every mismatch may be repaired by one substitution; in particular $\Ham(u,v)=1$ implies $\Lev(u,v)=1$.

\subsection{The Standard Fixed-Length $b$-ary Gray Code}
Fix a word length $w\ge 1$. For $0\le m<b^{w}$, let $a_{w-1}a_{w-2}\cdots a_{0}$ be the $w$-digit base-$b$ representation of $m$, with $a_{w-1}$ the most significant digit. The \emph{$w$-digit $b$-ary Gray code} of $m$ is the word $\G(m,w)=\gamma_{w-1}\gamma_{w-2}\cdots\gamma_{0}\in\Ab^{w}$ defined by
\begin{equation}
  \gamma_{w-1}=a_{w-1}, \qquad
  \gamma_{i}=(a_{i}-a_{i+1})\bmod b\quad (0\le i\le w-2),
  \label{eq:graydef}
\end{equation}
that is, each Gray digit is the difference, modulo $b$, between a base-$b$ digit and the next more significant one. The map is inverted by $a_{w-1}=\gamma_{w-1}$ and $a_{i}=(\gamma_{i}+a_{i+1})\bmod b$. For $b=2$, \eqref{eq:graydef} reads $\gamma_{i}=\beta_{i+1}\xor\beta_{i}$, the reflected binary Gray code, equivalently $\gray(m)=m\xor\lfloor m/2\rfloor$. Algorithm~\ref{alg:gray} evaluates \eqref{eq:graydef} in a single pass over the digits, from the most significant downward, and is used as a subroutine by the encoder of Section~\ref{sec:varlen}. We record the standard facts we shall use \cite{gilbert1958gray,doran2007gray,knuth2011taocp4a}.

\begin{figure}[!t]
\algheader{The $b$-ary Gray word $\G(m,w)=\gamma_{w-1}\cdots\gamma_{0}$ from the base-$b$ digits $a_{w-1}\cdots a_{0}$ of $m$, evaluating~\eqref{eq:graydef}.}\label{alg:gray}
\begin{algorithmic}[1]
\Procedure{GrayWord}{$b,\ a_{w-1}a_{w-2}\cdots a_{0}$}
  \State $\gamma_{w-1}\gets a_{w-1}$ \Comment{copy the most significant digit}
  \For{$i\gets w-2$ \textbf{downto} $0$}
    \State $\gamma_{i}\gets (a_{i}-a_{i+1})\bmod b$ \Comment{difference from next more significant digit}
  \EndFor
  \State \Return $\gamma_{w-1}\gamma_{w-2}\cdots\gamma_{0}$
\EndProcedure
\end{algorithmic}
\vskip1pt\hrule
\end{figure}

\begin{proposition}[Basic $b$-ary Gray-code facts]
\label{prop:gray}
Let $b\ge 2$ and $w\ge 1$.
\begin{enumerate}[label=\textup{(P\arabic*)},leftmargin=2.6em]
  \item \textup{(Bijection.)} The map $m\mapsto\G(m,w)$ is a bijection from $\{0,\dots,b^{w}-1\}$ onto $\Ab^{w}$.
  \item \textup{(Adjacency.)} For $0\le m<b^{w}-1$, the words $\G(m,w)$ and $\G(m+1,w)$ differ in exactly one position, and at that position by $+1$ modulo $b$; in particular $\Ham\!\big(\G(m,w),\G(m+1,w)\big)=1$.
  \item \textup{(Leading digit.)} The most significant digit of $\G(m,w)$ equals that of $m$.
  \item \textup{(Widening.)} If $w'>w$ and $0\le m<b^{w}$, then $\G(m,w')=0^{\,w'-w}\,\G(m,w)$.
  \item \textup{(Endpoints.)} $\G(0,w)=0^{w}$ and $\G(b^{w}-1,w)=(b-1)\,0^{\,w-1}$.
\end{enumerate}
\end{proposition}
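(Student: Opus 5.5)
Each of (P1)–(P5) follows directly from the digitwise definition \eqref{eq:graydef} and its inverse. We take the facts in the order (P1), (P3), (P4), (P5), (P2); only the adjacency property (P2) needs an argument about how base-$b$ digits change under increment.

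For (P1), the inversion formulas $a_{w-1}=\gamma_{w-1}$ and $a_i=(\gamma_i+a_{i+1})\bmod b$, applied from the most significant digit downward, send every word in $\Ab^{w}$ back to a unique digit string $a_{w-1}\cdots a_0$. The forward map \eqref{eq:graydef} and this recursion are mutual inverses, so $m\mapsto\G(m,w)$ is a bijection. Since $m\mapsto a_{w-1}\cdots a_0$ is itself a bijection $\{0,\dots,b^w-1\}\to\Ab^w$, the composition is a bijection as well.

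(P3) is immediate from $\gamma_{w-1}=a_{w-1}$. For (P4), the $w'$-digit representation of $m<b^w$ is $0^{w'-w}a_{w-1}\cdots a_0$. Applying \eqref{eq:graydef} gives zeros in the padded positions, since each is $0-0$. At position $w-1$ it gives $(a_{w-1}-0)\bmod b=a_{w-1}$. The remaining digits are unchanged, which yields $0^{w'-w}\G(m,w)$. For (P5), $m=0$ has all digits zero, so every difference is zero. The number $m=b^w-1$ has all digits $b-1$, so the leading Gray digit is $b-1$ and every difference is $0$.

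For (P2), which is the main step, fix $m<b^w-1$ and let $j$ be the number of trailing digits of $m$ equal to $b-1$; then $j\le w-1$ because $m\neq b^w-1$. Incrementing $m$ sets $a_0,\dots,a_{j-1}$ from $b-1$ to $0$, increases $a_j$ by one (with no overflow, since $a_j<b-1$), and leaves all higher digits fixed. We then compare the Gray digits $\gamma_i$ of $m$ and $m+1$ position by position:
\begin{itemize}[leftmargin=2.6em]
  \item For $i>j$, both $a_i$ and $a_{i+1}$ are unchanged, so $\gamma_i$ is unchanged.
  \item For $i<j-1$, we have $a_i=a_{i+1}=b-1$ before and $0$ after, so $\gamma_i=0$ in both cases.
  \item For $i=j-1$ (when $j\ge1$), before the increment $\gamma_{j-1}=(b-1-a_j)\bmod b$, and after it $\gamma_{j-1}=(0-(a_j+1))\bmod b$; these two values are equal.
  \item For $i=j$, the digit $a_j$ rises by one while $a_{j+1}$ (taken as $0$ if $j=w-1$) stays fixed, so $\gamma_j$ increases by exactly $1$ modulo $b$.
\end{itemize}
Thus exactly one position changes, and it changes by $+1$ modulo $b$, giving $\Ham=1$. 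The only care needed is the boundary cases $j=0$ and $j=w-1$, which the conventions above already cover.
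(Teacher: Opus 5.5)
Your proof is correct. For (P1), (P3), (P4) and (P5) you argue as the paper does: you invert \eqref{eq:graydef}, use $\gamma_{w-1}=a_{w-1}$, pad with zeros, and evaluate directly at $m=0$ and $m=b^{w}-1$.

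The difference is in (P2). The paper gives no argument there; it calls adjacency the defining property of the modular $b$-ary Gray code and cites Knuth. You instead prove it from \eqref{eq:graydef} by a carry analysis. You show that the only Gray digit that changes is $\gamma_j$, where $j$ is the number of trailing digits of $m$ equal to $b-1$. You also show that $\gamma_{j-1}$ is preserved because $(b-1)-a_j\equiv 0-(a_j+1)\pmod b$, and that the lower $\gamma_i$ remain $0$. Your handling of the boundary cases $j=0$ and $j=w-1$ through the convention $a_w=0$ is sound.

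The paper's citation is shorter, but it assumes without saying so that Knuth's modular code is exactly the digit-difference map \eqref{eq:graydef} with the same digit-ordering conventions. Your argument makes the proposition self-contained. It checks (P2) for precisely the map that Theorem~\ref{thm:lev} later relies on. It also identifies which position changes: the one where the carry stops.
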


\begin{proof}
(P1) The displayed inverse recovers the base-$b$ digits of $m$ from $\G(m,w)$, so the map is a bijection between the $b^{w}$ integers and the $b^{w}$ words of $\Ab^{w}$. (P2) is the defining property of the modular $b$-ary Gray code: adding one to $m$ changes exactly one Gray digit, and by $+1$ modulo $b$ \cite[\S7.2.1.1]{knuth2011taocp4a}. (P3) is immediate from~\eqref{eq:graydef}, since $\gamma_{w-1}=a_{w-1}$. For (P4), if $m<b^{w}$ then in width $w'$ the digits $a_{w},\dots,a_{w'-1}$ are zero, so $\gamma_{w'-1}=a_{w'-1}=0$ and $\gamma_{i}=(a_{i}-a_{i+1})\bmod b=0$ for $w\le i\le w'-2$, while $\gamma_{w-1}=(a_{w-1}-a_{w})\bmod b=a_{w-1}$; hence the extra high-order positions are zeros and the low-order block coincides with $\G(m,w)$. For (P5), $m=0$ gives $a_{i}=0$ for all $i$, so $\gamma_{i}=0$ and $\G(0,w)=0^{w}$; and $m=b^{w}-1$ gives $a_{i}=b-1$ for all $i$, so $\gamma_{w-1}=b-1$ and $\gamma_{i}=((b-1)-(b-1))\bmod b=0$ for $i\le w-2$, whence $\G(b^{w}-1,w)=(b-1)\,0^{\,w-1}$.
\end{proof}

\begin{remark}
\label{rem:modular}
Two $b$-ary Gray codes are standard: the \emph{reflected} code and the \emph{modular} code~\eqref{eq:graydef} used here \cite[\S7.2.1.1]{knuth2011taocp4a}. They coincide for $b=2$. We use the modular code because its endpoints are $0^{w}$ and $(b-1)\,0^{\,w-1}$ for \emph{every} $b$ \textup{(}property~\textup{(P5))}, whereas the reflected code ends at $(b-1)\,0^{\,w-1}$ only when $b$ is even and at $(b-1)^{w}$ when $b$ is odd. As Section~\ref{sec:varlen} shows, it is exactly~(P5) that makes every length block terminate at the all-zero word and lets consecutive blocks be joined by a single insertion for all bases at once.
\end{remark}

\subsection{The Variable-Length $b$-ary Gray Code}
\label{sec:varlen}
We order the natural numbers by codeword length. For $k\ge 0$ put
\[
  N_{k}=\sum_{i=0}^{k-1}b^{i}=\frac{b^{k}-1}{b-1}, \qquad N_{0}=0,
\]
so that $N_{0}<N_{1}<N_{2}<\cdots$ and $N_{k+1}-N_{k}=b^{k}$. The block
\[
  B_{k}=\{\,n\in\NN : N_{k}\le n<N_{k+1}\,\}
\]
therefore contains exactly $b^{k}$ integers, and the blocks $B_{0},B_{1},B_{2},\dots$ partition $\NN$. Every $n\in\NN$ has a unique representation
\[
  n=N_{k}+r, \qquad 0\le r<b^{k},
\]
in which $k$ is its \emph{block index} and $r$ its \emph{offset} within the block.

For a nonempty word $w=w_{k-1}w_{k-2}\cdots w_{0}\in\Ab^{k}$ let $\shift{w}$ denote the word obtained by incrementing its leading digit modulo $b$,
\[
  \shift{w}=\big((w_{k-1}+1)\bmod b\big)\,w_{k-2}\cdots w_{0},
\]
and set $\shift{\eps}=\eps$.

\begin{definition}[Variable-length $b$-ary Gray code]
\label{def:V}
For $n\in\NN$ with block index $k$ and offset $r=n-N_{k}$, the \emph{variable-length $b$-ary Gray code} of $n$ is
\[
  \V_{b}(n)=\shift{\G(r,k)}\in\Ab^{k},
\]
the $k$-digit $b$-ary Gray code of the offset, with its leading digit incremented modulo $b$. In particular $\V_{b}(0)=\shift{\G(0,0)}=\eps$.
\end{definition}

By endpoint property~(P5), $\G(0,k)=0^{k}$ and $\G(b^{k}-1,k)=(b-1)\,0^{\,k-1}$, so the leading increment sends the first word of block $k$ to $\V_{b}(N_{k})=\shift{0^{k}}=1\,0^{\,k-1}$ and the last word to $\V_{b}(N_{k+1}-1)=\shift{(b-1)0^{\,k-1}}=0^{k}$. Each block thus runs from $1\,0^{\,k-1}$ to $0^{k}$; and since the first word of the next block is $\V_{b}(N_{k+1})=1\,0^{k}$, consecutive blocks are joined by inserting a single leading digit, $0^{k}\mapsto1\,0^{k}$. Within a block, consecutive offsets yield Gray words that differ in a single position by~(P2), and the common leading increment preserves this; Section~\ref{sec:properties} makes these statements precise.

Algorithm~\ref{alg:vb} turns Definition~\ref{def:V} into a procedure that maps $n$ to $\V_{b}(n)$: it locates the block index $k$ and offset $r$ from the cumulative sizes $N_{k}=(b^{k}-1)/(b-1)$, forms the offset's Gray word $\G(r,k)$ by calling Algorithm~\ref{alg:gray}, and finally applies $\shift{\cdot}$ by incrementing the leading digit modulo $b$. Its cost is linear in the codeword length $k=\lfloor\log_{b}((b-1)n+1)\rfloor$ established in Section~\ref{sec:properties}.

\begin{figure}[!t]
\algheader{The variable-length $b$-ary Gray code $\V_{b}(n)=\shift{\G(r,k)}$ of Definition~\ref{def:V}.}\label{alg:vb}
\begin{algorithmic}[1]
\Require base $b\ge 2$ and a natural number $n\ge 0$
\Ensure the codeword $\V_{b}(n)\in\Abs$
\If{$n=0$}
  \State \Return $\eps$ \Comment{the empty word encodes $0$}
\EndIf
\State $k\gets 1$
\While{$(b^{k+1}-1)/(b-1)\le n$} \Comment{block index: $N_{k}\le n<N_{k+1}$}
  \State $k\gets k+1$
\EndWhile
\State $r\gets n-(b^{k}-1)/(b-1)$ \Comment{offset within block, $0\le r<b^{k}$}
\State $a_{k-1}a_{k-2}\cdots a_{0}\gets$ base-$b$ digits of $r$, padded to length $k$
\State $\gamma_{k-1}\gamma_{k-2}\cdots\gamma_{0}\gets\Call{GrayWord}{b,\ a_{k-1}a_{k-2}\cdots a_{0}}$ \Comment{$\G(r,k)$, Algorithm~\ref{alg:gray}}
\State $\gamma_{k-1}\gets(\gamma_{k-1}+1)\bmod b$ \Comment{$\shift{\cdot}$: increment the leading digit}
\State \Return $\gamma_{k-1}\gamma_{k-2}\cdots\gamma_{0}$
\end{algorithmic}
\vskip1pt\hrule
\end{figure}

\begin{remark}
For $b=2$ the code~\eqref{eq:graydef} is the reflected binary Gray code and $\shift{\cdot}$ flips the leading bit. In this case $\V_{2}(n)$ equals the reflected Gray code of $n+1$ with its leading zeros and then its leading one deleted---the Gray-coded bijective binary numeration mentioned in Section~\ref{sec:intro}. Tables~\ref{tab:example2}, \ref{tab:example10}, \ref{tab:example3} and Figs.~\ref{fig:walk2}, \ref{fig:walk10}, \ref{fig:walk3} illustrate $b=2$, $b=10$, and $b=3$ (the last two shown up to $n=1000$).
\end{remark}

\begin{table}[!t]
  \centering
  \caption{The first codewords of the binary code $\V_{2}$ (the case $b=2$). For $n$ in block $k$ the offset is $r=n-N_{k}$ with $N_{k}=2^{k}-1$; $\G(r,k)$ is the $k$-bit binary Gray code of $r$, and $\V_{2}(n)=\shift{\G(r,k)}$ increments its leading bit. Horizontal rules mark the block boundaries; each block runs from $1\,0^{\,k-1}$ to $0^{k}$.}
  \label{tab:example2}
  \footnotesize
  \begin{tabular}{@{}c c c@{\hspace{2em}}c c c@{}}
    \toprule
    $n$ & $\G(r,k)$ & $\V_{2}(n)$ & $n$ & $\G(r,k)$ & $\V_{2}(n)$ \\
    \midrule
    $0$ & $\eps$        & $\eps$       & $8$  & \texttt{001}  & \texttt{101} \\
    \cmidrule(r){1-3}
    $1$ & \texttt{0}    & \texttt{1}   & $9$  & \texttt{011}  & \texttt{111} \\
    $2$ & \texttt{1}    & \texttt{0}   & $10$ & \texttt{010}  & \texttt{110} \\
    \cmidrule(r){1-3}
    $3$ & \texttt{00}   & \texttt{10}  & $11$ & \texttt{110}  & \texttt{010} \\
    $4$ & \texttt{01}   & \texttt{11}  & $12$ & \texttt{111}  & \texttt{011} \\
    $5$ & \texttt{11}   & \texttt{01}  & $13$ & \texttt{101}  & \texttt{001} \\
    $6$ & \texttt{10}   & \texttt{00}  & $14$ & \texttt{100}  & \texttt{000} \\
    \cmidrule(lr){1-3}\cmidrule(lr){4-6}
    $7$ & \texttt{000}  & \texttt{100} & $15$ & \texttt{0000} & \texttt{1000} \\
    \bottomrule
  \end{tabular}
\end{table}

\begin{table}[!t]
  \centering
  \caption{The decimal code $\V_{10}$ ($b=10$) up to $n=1000$, with vertical ellipses compressing each block. Here $N_{k}=(10^{k}-1)/9$, so blocks $1,2,3$ are $\{1,\dots,10\}$, $\{11,\dots,110\}$, $\{111,\dots,1110\}$; $\G(r,k)$ is the $k$-digit decimal Gray code of the offset $r=n-N_{k}$, and $\V_{10}(n)=\shift{\G(r,k)}$ increments its leading digit modulo $10$. Each block runs from $1\,0^{\,k-1}$ to $0^{k}$; $n=1000$ lies inside block~$3$.}
  \label{tab:example10}
  \footnotesize
  \begin{tabular}{@{}r l l@{}}
    \toprule
    $n$ & $\G(r,k)$ & $\V_{10}(n)$ \\
    \midrule
    $0$    & $\eps$       & $\eps$ \\
    \midrule
    $1$    & \texttt{0}   & \texttt{1} \\
    $2$    & \texttt{1}   & \texttt{2} \\
           & $\vdots$     & $\vdots$ \\
    $10$   & \texttt{9}   & \texttt{0} \\
    \midrule
    $11$   & \texttt{00}  & \texttt{10} \\
           & $\vdots$     & $\vdots$ \\
    $110$  & \texttt{90}  & \texttt{00} \\
    \midrule
    $111$  & \texttt{000} & \texttt{100} \\
           & $\vdots$     & $\vdots$ \\
    $1000$ & \texttt{801} & \texttt{901} \\
           & $\vdots$     & $\vdots$ \\
    \bottomrule
  \end{tabular}
\end{table}

\begin{table}[!t]
  \centering
  \caption{The ternary code $\V_{3}$ ($b=3$) up to $n=1000$, with vertical ellipses compressing each block. Here $N_{k}=(3^{k}-1)/2$, so the block starts are $N_{1},\dots,N_{6}=1,4,13,40,121,364$; $\G(r,k)$ is the $k$-digit ternary Gray code of the offset $r=n-N_{k}$, and $\V_{3}(n)=\shift{\G(r,k)}$ increments the leading digit modulo $3$. Each block runs from $1\,0^{\,k-1}$ to $0^{k}$; $n=1000$ lies inside block~$6$.}
  \label{tab:example3}
  \footnotesize
  \begin{tabular}{@{}r l l@{\hspace{1.4em}}r l l@{}}
    \toprule
    $n$ & $\G(r,k)$ & $\V_{3}(n)$ & $n$ & $\G(r,k)$ & $\V_{3}(n)$ \\
    \midrule
    $0$  & $\eps$       & $\eps$       & $40$   & \texttt{0000}   & \texttt{1000} \\
    \cmidrule(r){1-3}
    $1$  & \texttt{0}   & \texttt{1}   &        & $\vdots$        & $\vdots$ \\
    $2$  & \texttt{1}   & \texttt{2}   & $120$  & \texttt{2000}   & \texttt{0000} \\
    \cmidrule(l){4-6}
    $3$  & \texttt{2}   & \texttt{0}   & $121$  & \texttt{00000}  & \texttt{10000} \\
    \cmidrule(r){1-3}
    $4$  & \texttt{00}  & \texttt{10}  &        & $\vdots$        & $\vdots$ \\
         & $\vdots$     & $\vdots$     & $363$  & \texttt{20000}  & \texttt{00000} \\
    \cmidrule(l){4-6}
    $12$ & \texttt{20}  & \texttt{00}  & $364$  & \texttt{000000} & \texttt{100000} \\
    \cmidrule(r){1-3}
    $13$ & \texttt{000} & \texttt{100} &        & $\vdots$        & $\vdots$ \\
         & $\vdots$     & $\vdots$     & $1000$ & \texttt{221211} & \texttt{021211} \\
    $39$ & \texttt{200} & \texttt{000} &        & $\vdots$        & $\vdots$ \\
    \bottomrule
  \end{tabular}
\end{table}

\begin{figure}[!t]
  \centering
  \resizebox{10cm}{!}{%
  \begin{tikzpicture}[node distance=6mm and 20mm]
    \node[blabel] (h0) at (0,1.05) {$\Len=0$};
    \node[blabel] (h1) at (2.5,1.05) {$\Len=1$};
    \node[blabel] (h2) at (5.0,1.05) {$\Len=2$};
    \node[blabel] (h3) at (7.5,1.05) {$\Len=3$};

    \node[emptynode] (e)  at (0,0.2) {$\eps$};
    \node[wordnode] (a1) at (2.5,0.2)  {1};
    \node[wordnode] (a0) at (2.5,-0.55) {0};
    \node[wordnode] (b10) at (5.0,0.2)   {10};
    \node[wordnode] (b11) at (5.0,-0.55) {11};
    \node[wordnode] (b01) at (5.0,-1.30) {01};
    \node[wordnode] (b00) at (5.0,-2.05) {00};
    \node[wordnode] (c100) at (7.5,0.2)   {100};
    \node[wordnode] (c101) at (7.5,-0.55) {101};
    \node[wordnode] (c111) at (7.5,-1.30) {111};
    \node[wordnode] (c110) at (7.5,-2.05) {110};
    \node[wordnode] (c010) at (7.5,-2.80) {010};
    \node[wordnode] (c011) at (7.5,-3.55) {011};
    \node[wordnode] (c001) at (7.5,-4.30) {001};
    \node[wordnode] (c000) at (7.5,-5.05) {000};

    \draw[subEdge] (a1) -- (a0);
    \draw[subEdge] (b10) -- (b11);
    \draw[subEdge] (b11) -- (b01);
    \draw[subEdge] (b01) -- (b00);
    \draw[subEdge] (c100) -- (c101);
    \draw[subEdge] (c101) -- (c111);
    \draw[subEdge] (c111) -- (c110);
    \draw[subEdge] (c110) -- (c010);
    \draw[subEdge] (c010) -- (c011);
    \draw[subEdge] (c011) -- (c001);
    \draw[subEdge] (c001) -- (c000);

    \draw[insEdge] (e)   to[out=-25,in=180] (a1);
    \draw[insEdge] (a0)  to[out=-25,in=180] (b10);
    \draw[insEdge] (b00) to[out=-25,in=180] (c100);

    \node[blabel, anchor=west] at (-0.8,-4.0) {\textcolor{blue!55!black}{$\rightarrow$} substitution};
    \node[blabel, anchor=west] at (-0.8,-4.6) {\textcolor{red!70!black}{$\dashrightarrow$} insertion of a leading \texttt{1}};
  \end{tikzpicture}%
  }%
  \caption{The enumeration $\V_{2}(0),\V_{2}(1),\V_{2}(2),\dots$ for $b=2$ as a walk on $\Bits$ under unit Levenshtein steps. Each column is a length block; within a column, consecutive codewords differ by one substitution (Theorem~\ref{thm:lev}, interior case), and each block boundary is a single insertion of a leading one, taking $0^{k}$ to $1\,0^{k}$ (Theorems~\ref{thm:lev} and~\ref{thm:step}). The walk visits every finite binary string exactly once (Theorem~\ref{thm:bij}).}
  \label{fig:walk2}
\end{figure}

\begin{figure}[!t]
  \centering
  \resizebox{13cm}{!}{%
  \begin{tikzpicture}
    \node[blabel] at (0,0.95)   {$\Len=0$};
    \node[blabel] at (2.4,0.95) {$\Len=1$};
    \node[blabel] at (4.8,0.95) {$\Len=2$};
    \node[blabel] at (7.2,0.95) {$\Len=3$};
    \node[emptynode] (e) at (0,0.2) {$\eps$};
    \node[wordnode] (n1) at (2.4,0.20)  {1};
    \node[wordnode] (n2) at (2.4,-0.58) {2};
    \node[blabel]   (nd) at (2.4,-1.28) {$\vdots$};
    \node[wordnode] (n0) at (2.4,-2.00) {0};
    \node[wordnode] (m10) at (4.8,0.20)  {10};
    \node[wordnode] (m11) at (4.8,-0.58) {11};
    \node[blabel]   (md)  at (4.8,-1.28) {$\vdots$};
    \node[wordnode] (m00) at (4.8,-2.00) {00};
    \node[wordnode] (p100) at (7.2,0.20)  {100};
    \node[wordnode] (p101) at (7.2,-0.58) {101};
    \node[blabel]   (pd)   at (7.2,-1.28) {$\vdots$};
    \draw[subEdge] (n1)--(n2); 
    \draw[subEdge] (nd)--(n0);
    \draw[subEdge] (m10)--(m11); 
    \draw[subEdge] (md)--(m00);
    \draw[subEdge] (p100)--(p101); 
    
    \draw[insEdge] (e)   to[out=-25,in=180] (n1);
    \draw[insEdge] (n0)  to[out=0,in=0,looseness=1.8] (m10);
    \draw[insEdge] (m00) to[out=0,in=0,looseness=1.8] (p100);
    \node[blabel, anchor=west] at (-0.8,-3.0) {\textcolor{blue!55!black}{$\rightarrow$} substitution};
    \node[blabel, anchor=west] at (-0.8,-3.6) {\textcolor{red!70!black}{$\dashrightarrow$} insertion of a leading \texttt{1}};
  \end{tikzpicture}%
  }%
  \caption{The enumeration $\V_{10}(0),\V_{10}(1),\dots$ for $b=10$ as a walk on $\{0,\dots,9\}^{*}$, up to $n=1000$; vertical ellipses compress each block. Within block~$1$ the ten codewords run $1,2,\dots,9,0$, each consecutive pair differing by a single-digit substitution; the steps $\eps\mapsto\texttt{1}$, $\texttt{0}\mapsto\texttt{10}$, and $\texttt{00}\mapsto\texttt{100}$ are single leading-digit insertions taking $0^{k}$ to $1\,0^{k}$. The index $n=1000$ lies in the length-$3$ block. The walk visits every finite decimal string exactly once.}
  \label{fig:walk10}
\end{figure}

\begin{figure*}[!t]
  \centering
  \resizebox{0.92\textwidth}{!}{%
  \begin{tikzpicture}
    \foreach \x/\l in {0/0,1.9/1,3.8/2,5.7/3,7.6/4,9.5/5,11.4/6}
      \node[blabel] at (\x,0.95) {$\Len=\l$};
    \node[emptynode] (e) at (0,0.2) {$\eps$};
    \node[wordnode] (a1) at (1.9,0.20)  {1};
    \node[wordnode] (a2) at (1.9,-0.55) {2};
    \node[wordnode] (a0) at (1.9,-1.30) {0};
    \node[wordnode] (b1) at (3.8,0.20)  {10};
    \node[blabel]   (bd) at (3.8,-0.55) {$\vdots$};
    \node[wordnode] (b0) at (3.8,-1.30) {00};
    \node[wordnode] (c1) at (5.7,0.20)  {100};
    \node[blabel]   (cd) at (5.7,-0.55) {$\vdots$};
    \node[wordnode] (c0) at (5.7,-1.30) {000};
    \node[wordnode] (d1) at (7.6,0.20)  {1000};
    \node[blabel]   (dd) at (7.6,-0.55) {$\vdots$};
    \node[wordnode] (d0) at (7.6,-1.30) {0000};
    \node[wordnode] (f1) at (9.5,0.20)  {10000};
    \node[blabel]   (fd) at (9.5,-0.55) {$\vdots$};
    \node[wordnode] (f0) at (9.5,-1.30) {00000};
    \node[wordnode] (g1) at (11.4,0.20)  {100000};
    \node[blabel]   (gd) at (11.4,-0.60) {$\vdots$};
    \draw[subEdge] (a1)--(a2); \draw[subEdge] (a2)--(a0);
    \draw[subEdge] (bd)--(b0);
    \draw[subEdge] (cd)--(c0);
    \draw[subEdge] (dd)--(d0);
    \draw[subEdge] (fd)--(f0);
    \draw[insEdge] (e)  to[out=-25,in=180] (a1);
    \draw[insEdge] (a0) to[out=0,in=0,looseness=1.7] (b1);
    \draw[insEdge] (b0) to[out=0,in=0,looseness=1.7] (c1);
    \draw[insEdge] (c0) to[out=0,in=0,looseness=1.7] (d1);
    \draw[insEdge] (d0) to[out=0,in=0,looseness=1.7] (f1);
    \draw[insEdge] (f0) to[out=0,in=0,looseness=1.7] (g1);
    \node[blabel, anchor=west] at (0.9,-2.15) {\textcolor{blue!55!black}{$\rightarrow$} substitution};
    \node[blabel, anchor=west] at (4.6,-2.15) {\textcolor{red!70!black}{$\dashrightarrow$} insertion of a leading \texttt{1}};
  \end{tikzpicture}%
  }%
  \caption{The enumeration $\V_{3}(0),\V_{3}(1),\dots$ for $b=3$ as a walk on $\{0,1,2\}^{*}$, up to $n=1000$; vertical ellipses compress each block. Block~$k$ (the length-$k$ ternary words) runs from $1\,0^{\,k-1}$ to $0^{k}$; consecutive codewords inside a block differ by a single-digit substitution, and each boundary $0^{k}\mapsto1\,0^{k}$ is a single leading-digit insertion. Because $b=3$ is small, the blocks stack quickly: $n=1000$ already lies in the length-$6$ block ($364\le n\le 1092$). The walk visits every finite ternary string exactly once.}
  \label{fig:walk3}
\end{figure*}

\section{Properties}
\label{sec:properties}

Throughout this section $b\ge 2$ is fixed and, for $k\ge 0$,
\[
  B_{k} \;=\; \{\, n\in\NN : \Len(\V_{b}(n))=k \,\}
\]
denotes the \emph{$k$-th length block}. We first determine the length of a codeword, which fixes the blocks explicitly, and then prove the four theorems. Every statement holds for every base $b\ge 2$.

\begin{lemma}[Codeword length]
\label{lem:length}
For every $n\in\NN$, $\Len(\V_{b}(n))=k$, where $k$ is the block index of $n$; equivalently
\[
  \Len(\V_{b}(n))=\big\lfloor\log_{b}\!\big((b-1)n+1\big)\big\rfloor .
\]
Consequently $B_{k}=\{\,n\in\NN : N_{k}\le n\le N_{k+1}-1\,\}$ with $|B_{k}|=b^{k}$, where $N_{k}=(b^{k}-1)/(b-1)$.
\end{lemma}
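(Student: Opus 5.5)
The plan is to read the length off Definition~\ref{def:V} and then convert the block condition into the closed form with the logarithm.

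First, the length. Fix $n$ and write it uniquely as $n=N_{k}+r$ with $0\le r<b^{k}$, as in Section~\ref{sec:varlen}. By (P1) of Proposition~\ref{prop:gray}, $\G(r,k)\in\Ab^{k}$ for $k\ge1$. The operator $\shift{\cdot}$ only changes the leading digit, so $\V_{b}(n)=\shift{\G(r,k)}$ also has length $k$. For $k=0$ we have $n=0$, and $\V_{b}(0)=\eps$ has length $0$. Hence $\Len(\V_{b}(n))$ equals the block index $k$. It follows that $B_{k}$, defined in Section~\ref{sec:properties} as the set of $n$ with codeword length $k$, is exactly $\{n: N_{k}\le n<N_{k+1}\}$. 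Its size is $N_{k+1}-N_{k}=b^{k}$.

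Second, the closed form. Since the block index $k$ is characterized by $N_{k}\le n<N_{k+1}$, I will rewrite this inequality using $N_{j}=(b^{j}-1)/(b-1)$. Multiplying by $b-1>0$ and adding $1$ turns it into the equivalent condition
\[
b^{k}\le (b-1)n+1<b^{k+1}.
\]
Because $\log_{b}$ is strictly increasing for $b\ge2$, this says $k\le\log_{b}((b-1)n+1)<k+1$, which is $k=\lfloor\log_{b}((b-1)n+1)\rfloor$. The case $n=0$ is consistent: it gives $\log_{b}1=0$.

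There is no real obstacle. The only points needing care are that the block decomposition is unique, which holds because the $N_{k}$ are strictly increasing with $N_{0}=0$ and unbounded, and that the empty-word case is covered by the convention $\shift{\eps}=\eps$.
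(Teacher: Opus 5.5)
Your proof is correct and follows the paper's argument essentially step for step: the length is read off Definition~\ref{def:V} because $\shift{\cdot}$ preserves length, and the closed form comes from multiplying $N_{k}\le n<N_{k+1}$ by $b-1$ and adding $1$ to obtain $b^{k}\le(b-1)n+1<b^{k+1}$. Your explicit handling of the $k=0$ case and of the uniqueness of the block decomposition is a small extra care that the paper leaves implicit.
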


\begin{proof}
By Definition~\ref{def:V}, $\V_{b}(n)=\shift{\G(r,k)}$ with $r=n-N_{k}$, and the leading increment does not change length, so $\Len(\V_{b}(n))=\Len(\G(r,k))=k$. Since the blocks partition $\NN$, $n\in B_{k}$ iff $N_{k}\le n<N_{k+1}$; multiplying by $b-1>0$ and adding $1$ turns this into $b^{k}\le(b-1)n+1<b^{k+1}$, i.e. $k\le\log_{b}((b-1)n+1)<k+1$, so $k=\lfloor\log_{b}((b-1)n+1)\rfloor$. Finally $|B_{k}|=N_{k+1}-N_{k}=b^{k}$.
\end{proof}

\subsection{Bijectivity}

\begin{lemma}[Block bijection]
\label{lem:block}
For every $k\ge 0$, the restriction of $\V_{b}$ to $B_{k}$ is a bijection onto $\Ab^{k}$.
\end{lemma}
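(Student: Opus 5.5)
The plan is to write the restriction of $\V_{b}$ to $B_{k}$ as a composition of three maps and to show that each is a bijection.

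First, by Lemma~\ref{lem:length} we have $B_{k}=\{N_{k},\dots,N_{k+1}-1\}$. The offset map $n\mapsto r=n-N_{k}$ is therefore a bijection from $B_{k}$ onto $\{0,\dots,b^{k}-1\}$; it is a translation with the explicit inverse $r\mapsto N_{k}+r$.

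Second, for $k\ge 1$, property (P1) of Proposition~\ref{prop:gray} says that $r\mapsto\G(r,k)$ is a bijection from $\{0,\dots,b^{k}-1\}$ onto $\Ab^{k}$.

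Third, the map $w\mapsto\shift{w}$ on $\Ab^{k}$ is a bijection of $\Ab^{k}$ onto itself. Its inverse decrements the leading digit modulo $b$ and leaves the other digits alone, because $x\mapsto(x+1)\bmod b$ is a permutation of $\Ab$. By Definition~\ref{def:V}, $\V_{b}$ restricted to $B_{k}$ is exactly the composite $n\mapsto\shift{\G(n-N_{k},k)}$. A composition of bijections is a bijection, which settles the case $k\ge 1$.

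The only point needing care is the degenerate case $k=0$, since Proposition~\ref{prop:gray} is stated only for $w\ge 1$. I would handle it directly. We have $B_{0}=\{0\}$ because $N_{0}=0$ and $N_{1}=1$, and $\Ab^{0}=\{\eps\}$. Definition~\ref{def:V} gives $\V_{b}(0)=\eps$, so the restriction is the unique map between two singletons and is trivially a bijection.

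As a sanity check on the counting, both sides have cardinality $b^{k}$ by Lemma~\ref{lem:length}. So it would even suffice to prove injectivity alone, which follows from the injectivity of the three factors. I do not expect any genuine obstacle beyond separating out this empty-word case.
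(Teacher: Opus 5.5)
Your proposal is correct and matches the paper's proof: you write the restriction as the composite of the offset translation, the Gray bijection (P1), and the leading-digit rotation $\shift{\cdot}$, and you treat $k=0$ separately as the map $\{0\}\to\{\eps\}$. You also point out explicitly that (P1) is stated only for $w\ge 1$. That is a slightly more careful justification for splitting off $k=0$ than the paper gives, but the argument is the same.
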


\begin{proof}
On $B_{k}$ the offset $r=n-N_{k}$ ranges over $\{0,\dots,b^{k}-1\}$, and by~(P1) the map $r\mapsto\G(r,k)$ is a bijection onto $\Ab^{k}$. The operator $\shift{\cdot}$ applies the bijection $x\mapsto(x+1)\bmod b$ of $\Ab$ to the leading digit and fixes the remaining digits, so it is a bijection of $\Ab^{k}$ onto itself. Composing, $n\mapsto\V_{b}(n)=\shift{\G(n-N_{k},k)}$ is a bijection from $B_{k}$ onto $\Ab^{k}$. For $k=0$, $B_{0}=\{0\}$ maps to $\Ab^{0}=\{\eps\}$.
\end{proof}

\begin{theorem}[Bijectivity]
\label{thm:bij}
$\V_{b}:\NN\to\Abs$ is a bijection. Equivalently, every finite word over $\Ab$ is the codeword of exactly one natural number, the empty string being the codeword of $0$.
\end{theorem}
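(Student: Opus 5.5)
The plan is to glue the block bijections of Lemma~\ref{lem:block} together along two partitions that are matched index by index. On the domain side, the blocks $B_{0},B_{1},B_{2},\dots$ partition $\NN$; this is part of the setup of Section~\ref{sec:varlen}, since $N_{0}=0$, the $N_{k}$ are strictly increasing and unbounded, and each $n$ has a unique block index $k$ and offset $r$. On the codomain side, $\Abs$ is the disjoint union $\bigcup_{k\ge 0}\Ab^{k}$, because every finite word has exactly one length. Lemma~\ref{lem:length} shows that $\V_{b}$ maps $B_{k}$ into $\Ab^{k}$, and Lemma~\ref{lem:block} upgrades this to a bijection $B_{k}\to\Ab^{k}$ for each $k$.

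For injectivity, I will take $m\neq n$ with $\V_{b}(m)=\V_{b}(n)$. The common codeword has a single length $k$. Lemma~\ref{lem:length} then places both $m$ and $n$ in $B_{k}$, and injectivity of the restriction to $B_{k}$ (Lemma~\ref{lem:block}) gives a contradiction.

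For surjectivity, I will take an arbitrary $w\in\Abs$ and set $k=\Len(w)$, so that $w\in\Ab^{k}$. Surjectivity of $\V_{b}|_{B_{k}}$ onto $\Ab^{k}$ then yields some $n\in B_{k}$ with $\V_{b}(n)=w$. The special case $k=0$ gives $B_{0}=\{0\}$ and $\Ab^{0}=\{\eps\}$, which is the claim that the empty string is the codeword of $0$. The "exactly one" formulation is simply bijectivity restated.

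No step here is a real obstacle, since all the work sits in Lemma~\ref{lem:block}, which rests on (P1) and the invertibility of $\shift{\cdot}$. The only point needing care is the bookkeeping that the length function separates codewords from different blocks. That is what prevents two blocks from colliding, and it is exactly what Lemma~\ref{lem:length} provides.
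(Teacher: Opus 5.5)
Your proposal is correct and follows essentially the same route as the paper: the matched partitions $\{B_{k}\}$ of $\NN$ and $\{\Ab^{k}\}$ of $\Abs$, with Lemma~\ref{lem:length} keeping blocks separate and Lemma~\ref{lem:block} supplying the blockwise bijections. The only difference is that you verify injectivity and surjectivity explicitly, where the paper simply cites the general fact that bijections between corresponding parts of two partitions glue to a bijection.
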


\begin{proof}
The blocks $\{B_{k}\}_{k\ge 0}$ partition $\NN$, since by Lemma~\ref{lem:length} each $n$ lies in $B_{k}$ for the unique $k$ with $N_{k}\le n<N_{k+1}$. The sets $\{\Ab^{k}\}_{k\ge 0}$ partition $\Abs$ by length. By Lemma~\ref{lem:block}, $\V_{b}$ maps each $B_{k}$ bijectively onto $\Ab^{k}$. A map that bijects the blocks of one partition onto the corresponding blocks of another is a bijection between the underlying sets; hence $\V_{b}:\NN\to\Abs$ is a bijection, with $\V_{b}(0)=\eps$.
\end{proof}

\subsection{Unit Levenshtein Distance Between Consecutive Codewords}

We isolate the codewords at the two ends of each block.

\begin{lemma}[Block-end codewords]
\label{lem:ends}
For every $k\ge 0$,
\[
  \V_{b}(N_{k+1}-1)=0^{k}
  \qquad\text{and}\qquad
  \V_{b}(N_{k+1})=1\,0^{k},
\]
and, for $k\ge 1$, $\V_{b}(N_{k})=1\,0^{\,k-1}$ \textup{(}with the convention $0^{0}=\eps$\textup{)}.
\end{lemma}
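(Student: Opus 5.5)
The plan is to compute each codeword directly from Definition~\ref{def:V}. In every case we determine the block index and the offset, apply the endpoint facts~(P5), and then apply $\shift{\cdot}$.

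For the first identity, fix $k\ge 0$. By Lemma~\ref{lem:length}, $n=N_{k+1}-1$ lies in block $B_{k}$, because $N_{k}\le N_{k+1}-1<N_{k+1}$. Its offset is $r=N_{k+1}-1-N_{k}=b^{k}-1$. The case $k=0$ is separate: here $n=0$ and $\V_{b}(0)=\eps=0^{0}$ by definition. For $k\ge 1$, (P5) gives $\G(b^{k}-1,k)=(b-1)\,0^{\,k-1}$. Incrementing the leading digit modulo $b$ turns $b-1$ into $0$, so $\V_{b}(N_{k+1}-1)=0\,0^{\,k-1}=0^{k}$.

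For the second identity, $n=N_{k+1}$ is the least element of $B_{k+1}$, so its block index is $k+1\ge 1$ and its offset is $r=0$. By (P5), $\G(0,k+1)=0^{k+1}$. The leading increment changes only the first digit, from $0$ to $1$; this needs $b\ge 2$, so that $1\bmod b=1$. Hence $\V_{b}(N_{k+1})=1\,0^{k}$. The third statement is the same identity after reindexing: for $k\ge 1$, replace $k+1$ by $k$ to get $\V_{b}(N_{k})=1\,0^{\,k-1}$.

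No step is a real obstacle. The only care needed is the degenerate case $k=0$. There (P5) does not formally apply, since Proposition~\ref{prop:gray} assumes $w\ge 1$, so we handle it directly from the convention $\V_{b}(0)=\eps$ and $\shift{\eps}=\eps$. We also keep track of which block each endpoint belongs to, via Lemma~\ref{lem:length}.
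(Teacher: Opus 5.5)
Your proposal is correct and follows essentially the same route as the paper. You compute each offset, apply (P5), apply the leading increment, and dispose of $k=0$ directly via $\V_{b}(0)=\eps=0^{0}$. The only cosmetic difference is the direction of reindexing: you prove $\V_{b}(N_{k+1})=1\,0^{k}$ directly and obtain $\V_{b}(N_{k})=1\,0^{\,k-1}$ from it, whereas the paper does the reverse.
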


\begin{proof}
For $k\ge 1$ the index $n=N_{k}$ has offset $r=0$, so $\V_{b}(N_{k})=\shift{\G(0,k)}=\shift{0^{k}}=1\,0^{\,k-1}$ by~(P5). For $n=N_{k+1}-1$ the offset is $r=(N_{k+1}-1)-N_{k}=b^{k}-1$, so by~(P5) $\V_{b}(N_{k+1}-1)=\shift{\G(b^{k}-1,k)}=\shift{(b-1)0^{\,k-1}}=0^{k}$; for $k=0$ this reads $\V_{b}(0)=\eps=0^{0}$, which holds by definition. Finally $\V_{b}(N_{k+1})=1\,0^{k}$ is the first identity applied to $k+1$.
\end{proof}

\begin{theorem}[Unit Levenshtein distance]
\label{thm:lev}
For every $n\in\NN$, $\Lev\!\big(\V_{b}(n),\V_{b}(n+1)\big)=1$.
\end{theorem}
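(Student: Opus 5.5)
The plan is to split according to whether $n$ and $n+1$ lie in the same length block or in consecutive blocks. By Lemma~\ref{lem:length}, either $N_{k}\le n<n+1<N_{k+1}$ for some $k\ge 1$ (the \emph{interior} case), or $n=N_{k+1}-1$ for some $k\ge 0$ (the \emph{boundary} case). The block $B_{0}=\{0\}$ has no interior pair, so $n=0$ always falls into the boundary case with $k=0$.

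\emph{Interior case.} Write $n=N_{k}+r$ with $0\le r<b^{k}-1$, so that $n+1$ has the same block index and offset $r+1$. By~(P2), $\G(r,k)$ and $\G(r+1,k)$ differ in exactly one position. The operator $\shift{\cdot}$ applies the same bijection $x\mapsto(x+1)\bmod b$ to the leading digit of both words and leaves every other digit fixed. Hence two digits at any position agree after applying $\shift{\cdot}$ if and only if they agreed before, which gives
\[
\Ham\big(\V_{b}(n),\V_{b}(n+1)\big)=\Ham\big(\G(r,k),\G(r+1,k)\big)=1.
\]
From the notation section we have $\Lev\le\Ham$ for equal-length words, so $\Lev\le 1$. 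Since $\V_{b}$ is injective (Theorem~\ref{thm:bij}), $\V_{b}(n)\ne\V_{b}(n+1)$, and therefore $\Lev\ge 1$.

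\emph{Boundary case.} For $n=N_{k+1}-1$, Lemma~\ref{lem:ends} gives $\V_{b}(n)=0^{k}$ and $\V_{b}(n+1)=1\,0^{k}$. A single insertion of the leading $1$ transforms the first word into the second, so $\Lev\le 1$. The two words have different lengths, so they are distinct and $\Lev\ge 1$.

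I expect no real obstacle. The only points needing care are that $\shift{\cdot}$ preserves Hamming distance, including the sub-case where the changed Gray digit is the leading one, and that the edge case $k=0$ ($\eps\mapsto 1$) is covered by Lemma~\ref{lem:ends}.
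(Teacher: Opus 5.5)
Your proposal is correct and follows the paper's proof essentially step for step: the same interior/boundary split, the same use of (P2) with the observation that $\shift{\cdot}$ preserves agreement at every position, and Lemma~\ref{lem:ends} for the insertion $0^{k}\mapsto 1\,0^{k}$, including the $k=0$ case $\eps\mapsto 1$. One small simplification is available, which the paper does not make either: in the interior case the appeal to injectivity is unnecessary, since $\Ham=1$ already forces the two words to be distinct.
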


\begin{proof}
By Lemma~\ref{lem:length}, $n$ is either interior to its block or at its right end.

\emph{Interior case: $n$ and $n+1$ lie in the same block $B_{k}$}, that is $n\ne N_{k+1}-1$. Their offsets are $r$ and $r+1$ in $\{0,\dots,b^{k}-1\}$, so by~(P2) the words $\G(r,k)$ and $\G(r+1,k)$ differ in exactly one position. Applying $\shift{\cdot}$ adds the same amount ($+1$ modulo $b$) to the leading digit of both, which can neither create nor destroy a disagreement: if the two words agree in the leading position they still do, and if they differ there, values $x\ne x'$ give $(x+1)\ne(x'+1)\bmod b$. Hence $\V_{b}(n)$ and $\V_{b}(n+1)$ have equal length $k$ and differ in exactly one position, so $\Ham(\V_{b}(n),\V_{b}(n+1))=1$ and $\Lev\le 1$; injectivity (Theorem~\ref{thm:bij}) rules out $0$, so the distance is $1$.

\emph{Boundary case: $n=N_{k+1}-1$ and $n+1=N_{k+1}$ for some $k\ge 0$.} By Lemma~\ref{lem:ends}, $\V_{b}(n)=0^{k}$ and $\V_{b}(n+1)=1\,0^{k}$. Inserting a single leading digit $1$ turns $0^{k}$ into $1\,0^{k}$, so $\Lev\le 1$; the two words have different lengths, hence are distinct, giving $\Lev\ge 1$. Thus the distance is $1$.
\end{proof}

\begin{remark}
The proof shows more than the stated bound: consecutive codewords within a block are one \emph{substitution} apart (equal length, one digit changed by $+1$ modulo $b$), whereas each block boundary is a single \emph{insertion} of a leading one, $0^{k}\mapsto1\,0^{k}$. The enumeration $\V_{b}(0),\V_{b}(1),\dots$ is thus a Hamiltonian walk of $\Abs$ under unit edit steps that alternates runs of substitutions---each run a $b$-ary Gray code over $\Ab^{k}$---with a solitary insertion at every index $N_{k}$; Figs.~\ref{fig:walk2}, \ref{fig:walk10} and~\ref{fig:walk3} depict $b=2$, $b=10$, and $b=3$.
\end{remark}

\subsection{Monotone Codeword Length}

\begin{theorem}[Monotone length]
\label{thm:mono}
$\Len(\V_{b}(n))=\lfloor\log_{b}((b-1)n+1)\rfloor$ is monotone non-decreasing: if $n\le n'$ then $\Len(\V_{b}(n))\le\Len(\V_{b}(n'))$.
\end{theorem}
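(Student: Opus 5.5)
The plan is to reduce the statement to Lemma~\ref{lem:length} and the monotonicity of elementary functions. By that lemma, $\Len(\V_{b}(n))=\lfloor\log_{b}((b-1)n+1)\rfloor$ for every $n\in\NN$, so the displayed formula needs no further proof. It remains to show that the map $n\mapsto\lfloor\log_{b}((b-1)n+1)\rfloor$ is non-decreasing on $\NN$.

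The argument is a composition of monotone maps. Fix $n\le n'$. Since $b-1>0$, the affine map $x\mapsto(b-1)x+1$ is strictly increasing, so
\[
1\le(b-1)n+1\le(b-1)n'+1 .
\]
Since $b\ge 2$, the map $\log_{b}$ is strictly increasing on $[1,\infty)$, so the logarithms satisfy the same inequality. Finally, the floor function is non-decreasing, because if $x\le y$ then $\lfloor x\rfloor\le x\le y$, and $\lfloor y\rfloor$ is the largest integer not exceeding $y$. Composing the three maps gives $\Len(\V_{b}(n))\le\Len(\V_{b}(n'))$.

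As a cross-check that avoids logarithms, I would also give the combinatorial version. By Lemma~\ref{lem:length}, the length equals the block index $k$ determined by $N_{k}\le n<N_{k+1}$. If $n\in B_{k}$ and $n'\in B_{k'}$ with $n\le n'$, suppose $k'<k$. Then $k'+1\le k$, and since $(N_{j})$ is increasing, $n'<N_{k'+1}\le N_{k}\le n$, which is a contradiction. Hence $k\le k'$.

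There is no real obstacle here. The only point needing care is to invoke the floor-expression equality from Lemma~\ref{lem:length} rather than reprove it, and to note that $b\ge 2$ guarantees both that $b-1>0$ and that $\log_{b}$ is increasing.
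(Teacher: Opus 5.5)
Your proposal is correct and follows the paper's proof: you invoke Lemma~\ref{lem:length} for the closed form, then compose the increasing affine map $n\mapsto(b-1)n+1$, the increasing $\log_{b}$, and the non-decreasing floor. Your block-index cross-check ($n'<N_{k'+1}\le N_{k}\le n$) is also valid, but the paper does not need it.
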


\begin{proof}
By Lemma~\ref{lem:length}, $\Len(\V_{b}(n))=\lfloor\log_{b}((b-1)n+1)\rfloor$. The map $n\mapsto(b-1)n+1$ is increasing and $\log_{b}$ is increasing, so $n\mapsto\log_{b}((b-1)n+1)$ is non-decreasing; the floor of a non-decreasing function is non-decreasing, giving the claim.
\end{proof}

\subsection{The Length-Increment Step}

\begin{theorem}[Length-increment step]
\label{thm:step}
For every $k\ge 1$,
\[
  \V_{b}\!\left(\,\sum_{i=0}^{k-1} b^{i}\right)=\V_{b}\!\left(\frac{b^{k}-1}{b-1}\right)=1\,0^{\,k-1}.
\]
\end{theorem}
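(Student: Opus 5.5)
The claim follows almost directly from the definition of $\V_{b}$ together with the endpoint property (P5). Nothing beyond earlier results is needed, so the plan is a short chain of identifications.

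First, I check that the two arguments on the left agree. The geometric sum gives $\sum_{i=0}^{k-1}b^{i}=(b^{k}-1)/(b-1)=N_{k}$, so both expressions equal $\V_{b}(N_{k})$.

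Next, I locate $N_{k}$ in the block structure. Because $N_{k}<N_{k+1}$, the index $n=N_{k}$ has block index $k$ and offset $r=n-N_{k}=0$; equivalently, Lemma~\ref{lem:length} gives $n\in B_{k}$. Definition~\ref{def:V} then yields $\V_{b}(N_{k})=\shift{\G(0,k)}$.

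Finally, I evaluate this word. For $k\ge1$, property (P5) gives $\G(0,k)=0^{k}$. Incrementing the leading digit modulo $b$ turns it into $(0+1)\bmod b=1$, which is valid since $b\ge2$, and leaves the remaining $k-1$ zeros unchanged. Hence $\shift{0^{k}}=1\,0^{k-1}$.

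This is exactly the first identity of Lemma~\ref{lem:ends}, so the proof can simply cite that lemma after the geometric-sum identification. There is no real obstacle. The only points needing care are the restriction $k\ge1$, so that $\shift{\cdot}$ acts on a nonempty word (for $k=0$ the statement would wrongly give $\V_{b}(0)=1\,0^{-1}$), and the fact that $1<b$, so the incremented digit is genuinely $1$.
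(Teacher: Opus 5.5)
Your proposal is correct and follows the paper's proof: identify $\sum_{i=0}^{k-1}b^{i}=(b^{k}-1)/(b-1)=N_{k}$, then cite Lemma~\ref{lem:ends} to get $\V_{b}(N_{k})=1\,0^{\,k-1}$. Your extra unpacking of that step (offset $r=0$, then (P5), then the leading increment) just repeats the lemma's own proof.
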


\begin{proof}
The finite geometric sum equals $\sum_{i=0}^{k-1}b^{i}=(b^{k}-1)/(b-1)=N_{k}$, so the two arguments coincide, and $\V_{b}(N_{k})=1\,0^{\,k-1}$ by Lemma~\ref{lem:ends}.
\end{proof}

\begin{corollary}[Block structure]
\label{cor:blocks}
For each $k\ge 1$, the codewords of length $k$ are exactly $\{\V_{b}(n): N_{k}\le n\le N_{k+1}-1\}=\Ab^{k}$; the smallest such index, $n=N_{k}=\sum_{i=0}^{k-1}b^{i}$, has codeword $\V_{b}(N_{k})=1\,0^{\,k-1}$, and the largest, $n=N_{k+1}-1$, has codeword $\V_{b}(N_{k+1}-1)=0^{k}$.
\end{corollary}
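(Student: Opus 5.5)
The corollary is a direct assembly of results already established, so I plan to split it into its two parts and cite the relevant lemma for each.

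\emph{Part 1: the set of codewords of length $k$.} Fix $k\ge 1$. Lemma~\ref{lem:length} shows that $\Len(\V_{b}(n))=k$ holds exactly when $N_{k}\le n\le N_{k+1}-1$. Hence the codewords of length $k$ are precisely the images $\V_{b}(n)$ for $n\in B_{k}$. Lemma~\ref{lem:block} shows that $\V_{b}$ restricted to $B_{k}$ is a bijection onto $\Ab^{k}$. Together these give
\[
\{\V_{b}(n): N_{k}\le n\le N_{k+1}-1\}=\Ab^{k},
\]
and no index outside this interval produces a word of length $k$.

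\emph{Part 2: the extreme indices.} Because $B_{k}$ is exactly the integer interval $[N_{k},N_{k+1}-1]$, its smallest element is $N_{k}$ and its largest is $N_{k+1}-1$. I will compute their codewords as follows.
\begin{itemize}
\item For the smallest index, Theorem~\ref{thm:step} (equivalently Lemma~\ref{lem:ends}) gives $\V_{b}(N_{k})=1\,0^{\,k-1}$. The identity $N_{k}=\sum_{i=0}^{k-1}b^{i}$ is the finite geometric sum already used in that theorem.
\item For the largest index, Lemma~\ref{lem:ends} gives $\V_{b}(N_{k+1}-1)=0^{k}$.
\end{itemize}

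There is no real obstacle in this argument. The only point needing care is the phrase ``smallest such index''. I will justify it by combining two facts: the interval description of $B_{k}$, and the fact from Part 1 that no index outside $B_{k}$ yields a codeword of length $k$.
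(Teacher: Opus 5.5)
Your proposal is correct and follows the paper's proof, which likewise assembles the corollary directly from Lemmas~\ref{lem:length}, \ref{lem:block}, \ref{lem:ends} and Theorem~\ref{thm:step} (the theorem used for the smallest index). Your extra justification of ``smallest'' and ``largest'' index, via the interval description of $B_{k}$, simply spells out what the paper leaves implicit.
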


\begin{proof}
Immediate from Lemmas~\ref{lem:length},~\ref{lem:block} and~\ref{lem:ends} together with Theorem~\ref{thm:step} for the smallest index.
\end{proof}

Theorem~\ref{thm:step} identifies the exact moment the code lengthens: the first codeword of length $k$ appears at $n=N_{k}=(b^{k}-1)/(b-1)$ and equals $1\,0^{\,k-1}$, and Theorem~\ref{thm:lev} guarantees that reaching it from the last length-$(k-1)$ codeword $0^{\,k-1}$ costs a single insertion. For $b=2$ these indices are $2^{k}-1$; for $b=10$ they are the repunits $1,11,111,\dots$

\section{Discussion}
\label{sec:discussion}

\subsection{Interpretation}
Theorems~\ref{thm:bij}--\ref{thm:step} together say that $\V_{b}$ is a Gray code for the countably infinite set $\Abs$, ordered by the natural numbers, in which the local move is a unit edit rather than a unit digit change. Classical reflected Gray codes are Hamiltonian paths of a finite $b$-ary cube under Hamming adjacency \cite{gilbert1958gray}; $\V_{b}$ is a Hamiltonian walk of the infinite string space under Levenshtein adjacency, decomposing into cube-like substitution runs---each run a $b$-ary Gray code over $\Ab^{k}$---that are stitched together by single insertions at the indices $N_{k}=(b^{k}-1)/(b-1)$. The construction achieves this without sacrificing bijectivity: unlike a fixed-length code, which caps the range at $b^{w}-1$, the variable-length code uses every finite word over $\Ab$, including the empty one, exactly once. The binary case $b=2$ is the Hamiltonian walk of the hypercube; larger bases trade a taller substitution run within each block ($b^{k}$ words instead of $2^{k}$) for shorter codewords.

\subsection{Compression Relative to Fixed-Length Gray Codes}
Because $\Len(\V_{b}(n))=\lfloor\log_{b}((b-1)n+1)\rfloor$ (Lemma~\ref{lem:length}), the code spends only as many digits as the magnitude of the integer warrants, whereas a fixed $w$-digit $b$-ary Gray code spends $w$ digits on every integer it represents and represents only $\{0,\dots,b^{w}-1\}$. To store all integers in $\{0,1,\dots,N\}$, the fixed code must choose $w=\lceil\log_{b}(N+1)\rceil$, for a total of $(N+1)\,w$ digits. Over a range that fills blocks $0,\dots,K$ exactly, that is $N=N_{K+1}-1$, the variable-length code instead uses
\begin{equation*}
  \sum_{n=0}^{N_{K+1}-1}\Len(\V_{b}(n))=\sum_{k=0}^{K}k\,b^{k}
  =\frac{K\,b^{K+2}-(K+1)\,b^{K+1}+b}{(b-1)^{2}},
\end{equation*}
which is smaller and, per integer, is within one digit of the ideal $\log_{b}$ of the magnitude. Figures~\ref{fig:comp2}, \ref{fig:comp10} and~\ref{fig:comp3} contrast the two length profiles and the cumulative storage for $b=2$, $b=10$, and $b=3$, each taken over the range $\{0,\dots,b^{w}-1\}$ that a fixed $w$-digit code represents ($w=5,3,6$, i.e.\ $n$ up to $31$, $999$, $728$). The saving is largest exactly where it is most often needed, on the small integers that dominate index-like and counter-like data; it is bought at the usual price of a variable-length code, namely that codewords are not, on their own, self-delimiting. Being a bijective rather than a prefix-free code, $\V_{b}$ places every finite word in use, so a boundary between adjacent codewords in a concatenated stream must be supplied by context or by composition with a self-delimiting envelope such as an Elias code \cite{elias1975universal} or a Fibonacci-based code \cite{apostolico1987robust,fraenkel1996robust}; in exchange it attains the shortest possible length profile that a monotone complete code can have, since the $b^{k}$ words of length $k$ are all used before any word of length $k+1$.

\begin{figure*}[!t]
  \centering
  \includegraphics[width=\textwidth]{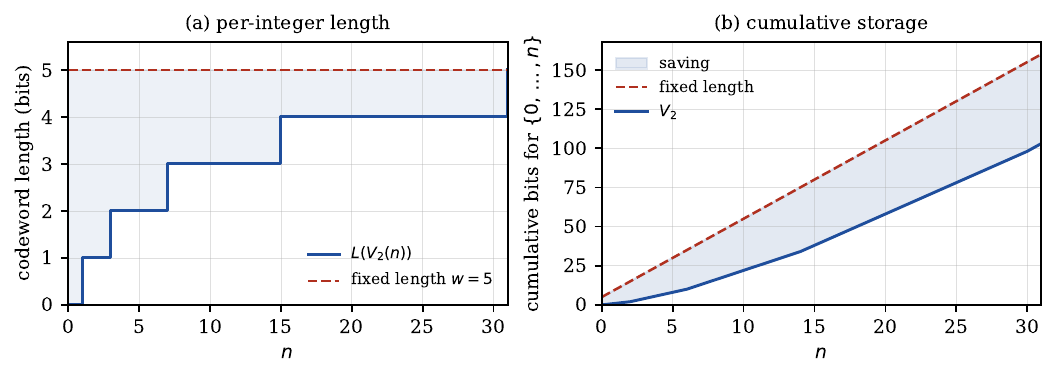}
  \caption{Compression of the binary code $\V_{2}$ against a fixed-length $5$-bit binary Gray code, over the range $\{0,\dots,2^{5}-1\}=\{0,\dots,31\}$ that such a code represents. (a)~Per-integer codeword length $\Len(\V_{2}(n))=\lfloor\log_{2}(n+1)\rfloor$ (steps) versus the constant $5$ bits (dashed). (b)~Cumulative bits to store $\{0,\dots,n\}$; the shaded gap is the saving, greatest on the small integers.}
  \label{fig:comp2}
\end{figure*}

\begin{figure*}[!t]
  \centering
  \includegraphics[width=\textwidth]{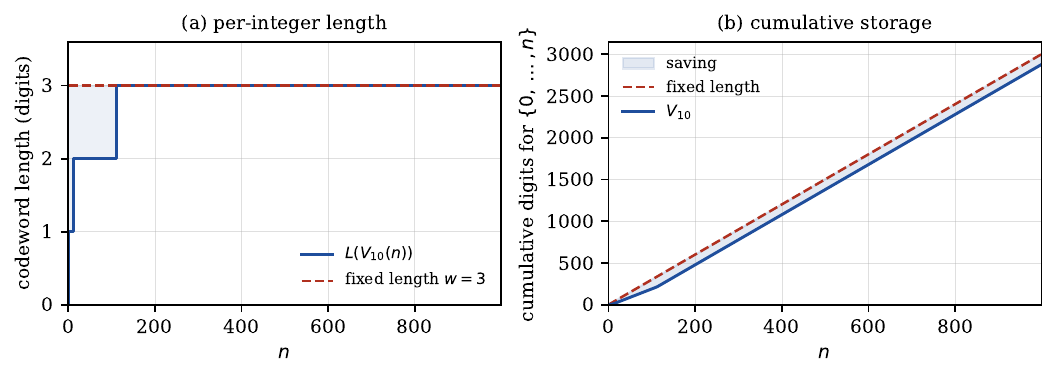}
  \caption{Compression of the decimal code $\V_{10}$ against a fixed-length $3$-digit decimal Gray code, over the range $\{0,\dots,10^{3}-1\}=\{0,\dots,999\}$ that such a code represents. (a)~Per-integer codeword length $\Len(\V_{10}(n))=\lfloor\log_{10}(9n+1)\rfloor$ (steps) versus the constant $3$ digits (dashed). (b)~Cumulative digits to store $\{0,\dots,n\}$; the shaded gap is the saving.}
  \label{fig:comp10}
\end{figure*}

\begin{figure*}[!t]
  \centering
  \includegraphics[width=\textwidth]{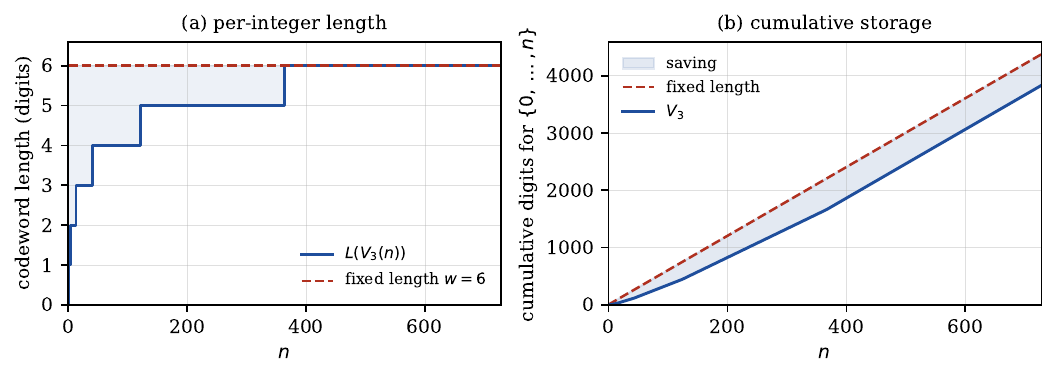}
  \caption{Compression of the ternary code $\V_{3}$ against a fixed-length $6$-digit ternary Gray code, over the range $\{0,\dots,3^{6}-1\}=\{0,\dots,728\}$ that such a code represents. (a)~Per-integer codeword length $\Len(\V_{3}(n))=\lfloor\log_{3}(2n+1)\rfloor$ (steps) versus the constant $6$ digits (dashed). (b)~Cumulative digits to store $\{0,\dots,n\}$; the shaded gap is the saving. The staircase has more, finer steps than at larger bases because ternary codewords grow one digit at a time over shorter ranges.}
  \label{fig:comp3}
\end{figure*}

\subsection{A Complete, Unit-Edit Representation for Numeric Tokens}
\label{subsec:llm}
The properties proved above have a natural application to the way current language models represent the natural numbers they generate. We state it as a design rationale and a testable hypothesis, not as a proven performance claim.

Language models emit numbers as strings of symbols drawn one step at a time from a fixed vocabulary, and the representation used for those symbols has a measurable effect on what is learned \cite{nogueira2021arithmetic,wallace2019numbers,singh2024tokenization,thawani2021representing}. Two structural defects of ordinary base-$b$ notation are relevant, and $\V_{b}$ repairs each by construction.

\emph{Incompleteness and wasted probability mass.} Under the usual convention a natural number has a unique shortest base-$b$ representation only if leading zeros are forbidden; the admissible strings---those with a nonzero leading digit, together with the single string ``$0$''---form a sparse, syntactically constrained subset of $\Ab^{*}$. A generative model over symbol sequences nonetheless assigns probability to \emph{every} string, so at each numeric position it must learn to concentrate its output on the legal forms and to suppress the many illegal ones (spurious leading zeros, and the empty string). That is capacity and training signal spent on syntax rather than on the quantities themselves, and it makes the string-to-value map partial or many-to-one rather than a clean bijection. By Theorem~\ref{thm:bij}, $\V_{b}$ is a bijection onto \emph{all} of $\Abs$: every string the model can produce is the valid, unique codeword of some natural number, so no output is wasted on an invalid form and there is no redundant spelling to normalize away.

\emph{Non-locality of the successor.} In base-$b$ notation consecutive integers can differ in arbitrarily many symbols: the step $b^{k}-1\mapsto b^{k}$ rewrites $k$ digits and lengthens the string. When numerically adjacent values are far apart as strings, the model must learn counting and the successor relation as global rewrites. Under $\V_{b}$, Theorem~\ref{thm:lev} guarantees that consecutive integers are exactly one edit apart---one symbol substituted within a length block, one symbol inserted at a block boundary---so counting becomes a local operation on the codeword and the successor map has minimal descriptive complexity. This is the same mechanism that makes classical Gray codes robust in position encoders and analog-to-digital converters \cite{gray1953pulse,doran2007gray}: a single-symbol slip decodes to a neighboring integer rather than a distant one, so representation errors degrade gracefully in magnitude instead of catastrophically---a desirable property when a sampled numeral may contain an occasional wrong symbol.

Two further features fit the setting. The base $b$ is a free parameter (Section~\ref{sec:method}), so the alphabet can be matched to a chosen numeric vocabulary: digit-level ($b=10$) encoding, which several production models already adopt \cite{singh2024tokenization}, or a larger $b$ to shorten codewords. And the length profile is near-optimal, $\Len(\V_{b}(n))=\lfloor\log_{b}((b-1)n+1)\rfloor$ (Lemma~\ref{lem:length}), so the compactness gains quantified above apply, with the small integers that dominate index- and counter-like data receiving the shortest codewords.

Two caveats bound the claim. First, $\V_{b}$ is not self-delimiting: in a token stream the extent of a numeral must still be marked, whether by a reserved delimiter, a fixed field width, or composition with a self-delimiting envelope \cite{elias1975universal,fraenkel1996robust}. This is orthogonal to the two properties above and is shared by ordinary digit strings, which are likewise not self-delimiting once embedded in text. Second, whether these structural advantages translate into measurably better learning or generalization is an empirical question: it calls for training models on $\V_{b}$-encoded numerals and comparing against standard sub-word and digit-level baselines on counting, arithmetic, and retrieval of numeric keys. The construction makes such an experiment well defined by supplying, for every base, a complete code whose successor is a unit edit; we leave its empirical evaluation to future work.

\section{Conclusions}
\label{sec:conclusion}

We have introduced, for every integer base $b\ge 2$, a variable-length Gray code $\V_{b}$ that maps the natural numbers onto the set of all finite words over the alphabet $\{0,1,\dots,b-1\}$. The construction orders the numbers into length blocks---block $k$ being the $b^{k}$ words of length $k$, starting at $N_{k}=(b^{k}-1)/(b-1)$---and lists each block along the $b$-ary Gray code of the within-block offset, with its leading digit incremented modulo $b$. We proved four properties, all for arbitrary $b$. The code is a bijection onto $\Abs$, hence a complete code that uses every word exactly once, including the empty word (Theorem~\ref{thm:bij}). Consecutive integers have codewords at unit Levenshtein distance---a single-digit substitution inside each length block and a single leading-digit insertion $0^{k}\mapsto1\,0^{k}$ at every block boundary---so the enumeration is a Hamiltonian walk of the string space under unit edit steps (Theorem~\ref{thm:lev}). Codeword length is monotone non-decreasing and equals $\lfloor\log_{b}((b-1)n+1)\rfloor$ (Theorem~\ref{thm:mono}). The code lengthens exactly at $n=N_{k}=\sum_{i=0}^{k-1}b^{i}$, where the codeword is $1\,0^{\,k-1}$ (Theorem~\ref{thm:step}). The binary code ($b=2$) recovers the reflected Gray-code bijective numeration, and the decimal code ($b=10$) served as a second example; using the modular $b$-ary Gray code, whose block ends at $(b-1)0^{\,k-1}$ for every $b$, is what lets a single leading-digit insertion join consecutive blocks at every base.

The result is a family of codes that keep the single-change spirit of the classical Gray code while lifting both its fixed-length and its binary-alphabet restrictions, gaining an unbounded range, a near-optimal self-adapting length, and completeness over $\Abs$ at every base.

Beyond its combinatorial interest, the code is motivated by a practical question in machine learning: language models generate natural numbers constantly, yet the positional notations they use are neither complete---leading-zero strings are invalid or redundant---nor locally stable under increment. A code that is a bijection onto all finite strings and moves by a single edit between consecutive integers removes both defects, making every generated string a valid numeral and turning the successor relation into a local operation. We have argued (Section~\ref{subsec:llm}) that these are desirable properties for representing numeric tokens, and we identify the empirical evaluation of $\V_{b}$-encoded numerals in language-model training---against standard sub-word and digit-level tokenizations, on counting, arithmetic, and numeric-key retrieval---as the natural next step.


\bibliographystyle{IEEEtran}
\bibliography{references}

\vfill

\end{document}